\theoremstyle{definition}
\newtheorem{lemma}{Lemma}
\newtheorem{theorem}{Theorem}
\begin{document}
%
\title{Beamforming Towards Seamless Sensing Coverage for  Cellular Integrated Sensing and Communication}

\author{ \IEEEauthorblockN{Ruoguang Li\IEEEauthorrefmark{1}, Zhiqiang Xiao\IEEEauthorrefmark{1}, and Yong Zeng\IEEEauthorrefmark{1}\IEEEauthorrefmark{2}\\}
\IEEEauthorblockN{\IEEEauthorrefmark{1}National Mobile Communications Research Laboratory, Southeast University, Nanjing, P. R. China\\}
\IEEEauthorblockN{\IEEEauthorrefmark{2}Purple Mountain Laboratories, Nanjing, P. R. China\\}
\IEEEauthorblockN{\{ruoguangli, zhiqiang\_xiao, yong\_zeng\}@seu.edu.cn\\}
}
\maketitle

\begin{abstract}
The sixth generation (6G) mobile communication networks are expected to offer a new paradigm of cellular integrated sensing and communication (ISAC). However, due to the intrinsic difference between sensing and communication in terms of coverage requirement, current cellular networks that are deliberately planned mainly for communication coverage are difficult to achieve seamless sensing coverage. To address this issue, this paper studies the beamforming optimization towards seamless sensing coverage for a basic bi-static ISAC system, while ensuring that the communication requirements of multiple users equipment (UEs) are satisfied. Towards this end, an optimization problem is formulated to maximize the worst-case sensing signal-to-noise ratio (SNR) in a prescribed coverage region, subject to the signal-to-interference-plus-noise ratio (SINR) requirement for each UE. To gain some insights, we first investigate the special case with one single UE and one single sensing point, for which a closed-from expression of the optimal beamforming is obtained. For the general case with  multiple communication UEs and contiguous regional sensing coverage, an efficient algorithm based on successive convex approximation (SCA) is proposed to solve the non-convex beamforming optimization problem. Numerical results demonstrate that the proposed design is able to achieve seamless sensing coverage in the prescribed region, while guaranteeing the communication requirements of the UEs.
\end{abstract}


%

\section{Introduction}

Integrated sensing and communication (ISAC) is envisioned to become an important feature for the sixth generation (6G) mobile communication networks \cite{Fliu2021}, in which radar sensing and communication capabilities are fully integrated with shared platform and radio resources. In particular, enabling ISAC with cellular infrastructure is a promising 
framework to achieve enhanced sensing and communication performance via cooperation among base stations (BSs) \cite{JAZhang2022}, which has a wide range of potential use cases such as high-accuracy localization, autonomous driving, beamforming alignment, etc. Therefore, extensive research efforts have been devoted to the study of cellular ISAC from the perspectives of information theory, signal processing, waveform design, and resource management, etc \cite{ALiu2021, JAZhang2021, ZXiao2022, NCLuong2021}.

However, the coverage performance, which is a prerequisite to offer ``always-available'' ISAC service in the prescribed region, has received little attention. Compared to the standalone communication or sensing networks, the coverage issue for ISAC is more complicated due to the intrinsic difference between communication and sensing in terms of coverage requirement. Specifically, in contrast to the typically one-hop signal propagation for wireless communication, the signal attenuation for radar sensing generally involves two hops by target reflection, indicating heterogeneous coverage range between such two functionalities. Secondly, the coverage of wireless communication is usually BS-centric with disk-like shape. In contrast, the coverage shapes of radar sensing critically depend on the radar deployment mode. 
For example, in the mono-static radar mode, the coverage shape is similar to that in the wireless communication, but the actual coverage radius depends on the radar cross-section (RCS) for the targets of interest \cite{EFKnott1993}. Besides, for bi-static radar mode without dedicated coverage optimization, the coverage is typically characterized by a series of signal-to-noise ratio (SNR) contours with diverse shapes known as \emph{Cassini oval} \cite{MLSkolnik1990}, as illustrated in Fig. \ref{Fig_cassini_oval}. The sensing coverage issue is even more complicated for multi-static radar mode since the sensing regions of different radars are coupled with each other \cite{SRDoughty2008}.  Therefore, current cellular networks that are mainly planned for communication coverage are difficult to  achieve seamless sensing coverage for future ISAC systems.
\vspace{-4mm}
\begin{figure}[h]
\center
\includegraphics[width=2.2in]{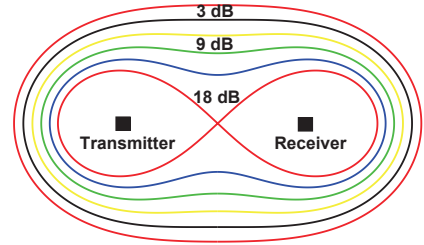}\\
\caption{Cassini oval for SNR contours of  bi-static radar without sensing coverage optimization \cite{MLSkolnik1990}.}\label{Fig_cassini_oval}
\end{figure}
\vspace{-3mm}

To address the above issue, in this paper, we study a basic bi-static cellular ISAC system to provide seamless sensing coverage for the prescribed region, while satisfying the communication requirements of multiple users equipment (UEs). To that end, a beamforming optimization problem is formulated to maximize the worst-case sensing SNR in the prescribed region, subject to the signal-to-interference-plus-noise ratio (SINR) requirements for each communication UE. The formulated problem is non-convex and difficult to be efficiently solved in general.  To gain some insights, we first investigate the special case with one single UE and one single sensing point, for which a closed-from expression of the optimal beamforming is obtained. For the general case with multiple UEs and contiguous regional sensing coverage, an efficient algorithm based on successive convex approximation (SCA) technique is proposed for the non-convex optimization problem. Numerical results demonstrate that the proposed design is able to achieve seamless sensing coverage in the prescribed region, while guaranteeing the communication requirements of the UEs.

\section{System Model and Problem Formulation}
\vspace{-4mm}
\begin{figure}[h]
\center
\includegraphics[width=2.7in]{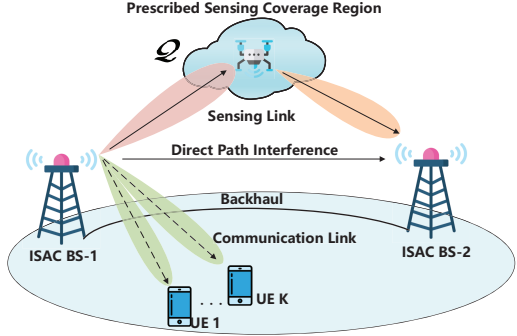}\\
\caption{Bi-static ISAC with prescribed coverage region.}\label{Fig_1}
\end{figure}
\vspace{-3mm}
As illustrated in Fig. \ref{Fig_1}, we consider a basic bi-static ISAC system, which consists of two multi-antenna BSs that are inter-connected via backhaul links.  We assume that BS-1 corresponds to the ISAC transmitter, while BS-2 is the sensing receiver that receives the reflected signals from the sensing target. Let $M_t$ and $M_r$ denote the number of antennas at BS-1 and BS-2, respectively. The objective is to communicate with $K$ single-antenna UEs, denoted by the set $\mathcal{K}=\{1,...,K\}$,  where $K\leq M_t$, while ensuring the seamless sensing coverage for the prescribed region, denoted by $\mathcal{Q}$. Without loss of generality, we consider a three-dimensional (3D) Cartesian coordinate system, in which BS-1 and BS-2 are located at $\mathbf{o}=(0,0,H)$ and $\mathbf{o}'=(0,D,
H)$, respectively, where $H$ is the BS height and $D$ is the inter-BS distance. Furthermore, let ${\mathbf q}\in \mathcal{Q}$ denote an arbitrary location within the prescribed coverage region $\mathcal{Q}$. 

Let $\mathbf{s}(t)=[s_1(t),...,s_{M_t}(t)]^T\in \mathbb{C}^{M_t\times 1}$ denote the signal vector at BS-1, in which $[s_1(t),...,s_{K}(t)]^T\in \mathbb{C}^{K\times 1}$ is the zero-mean unit-power information-bearing signal vector associated with the $K$ communication UEs, and $[s_{K+1}(t),...,s_{M_t}(t)]^T\in \mathbb{C}^{(M_t-K)\times 1}$ is the dedicated radar waveform vector.
Note that since the BSs are inter-connected via backhaul links, the information-bearing signals $[s_{1}(t),...,s_{K}(t)]^T$ can be made available at the sensing receiver BS-2. This makes it possible to utilize the communication signals for sensing coverage as well. Furthermore, to fully exploit the spatial degree of freedom (DoF) of the $M_t$ transmit antennas, without loss of generality, dedicated sensing waveforms $[s_{K+1}(t),...,s_{M_t}(t)]^T$ are also included in the problem formulation. 

Since $s_k(t), k\in \{1,...,K\}$ are random communication signals that are independent for different UEs, we have
\begin{align}
\small
\label{comm_auto}
    {\rm{\bf E}}\left[s_k(t)s^{*}_{\tilde k}(t-\tau)\right]=
\left\{
    \begin{array}{lr}
        R_{C}(\tau), &k=\tilde k,\\
        0, &k\neq\tilde k,\\
     \end{array}
\right.
\end{align}
$k,{\tilde k}\in \{1,...,K\}$, where $R_{C}(\tau)$ is the autocorrelation function of the random communication signals $s_k(t), k\in \{1,...,K\}$ with normalized power $R_{C}(0)=1$. On the other hand,  the dedicated radar waveforms $s_k(t),k\in \{K+1,...,M_t\}$ are deterministic, which are orthogonal with each other, namely,
\begin{align}
\small
\label{radar_auto}
    \frac{1}{T_p}\int_{T_p}s_k(t)s^{*}_{\tilde k}(t-\tau)dt=
\left\{
    \begin{array}{lr}
        R_S(\tau), &k=\tilde k,\\
        0, &k\neq\tilde k.\\
     \end{array}
\right.
\end{align}
$k,{\tilde k}\in \{K+1,...,M_t\}$, where $R_{S}(\tau)$ denotes autocorrelation function with normalized power $R_{S}(0)=1$, and $T_p$ is the duration of coherent processing interval (CPI), which is assumed to be much larger than the symbol duration, i.e., $T_p\gg\frac{1}{B}$, where $B$ is the system bandwidth. 

In addition, the communication signals are uncorrelated to the dedicated radar signals, i.e., ${\rm{\bf E}}\left[s_k(t)s_{{\tilde k}}(t-\tau )\right]=0$, $k\in\{1,...,K\}$ and ${\tilde k}\in\{K+1,...,M_t\}$. Moreover, let $\mathbf{W}=[\mathbf{w}_{1},\mathbf{w}_{2},...,\mathbf{w}_{M_t}]\in \mathbb{C}^{M_t \times M_t}$ denote the transmit beamforming matrix. Therefore, the signal $\mathbf{x}(t)\in\mathbb{C}^{M_t\times 1}$ transmitted by BS-1 can be expressed as
\begin{equation}
\small
    \mathbf{x}(t)=\sum^{M_t}_{m=1}\mathbf{w}_ms_m(t)=\mathbf{W}\mathbf{s}(t),
\end{equation}
with the average transmit power ${\rm{\bf E}}\left[\|\mathbf{x}(t)\|^2\right]={\rm tr}(\mathbf{W}\mathbf{W}^{H})\leq P_t$, where $P_t$ is the transmit power limit.

Denote by $\mathbf{h}_k\in\mathbb{C}^{M_t\times 1}$ the communication channel vector from BS-1 to UE $k\in\mathcal{K}$. The received signal at UE $k$ is given by
\begin{align}
\small
\label{Eq_received_comm}
    y_k(t)=\mathbf{h}^{H}_k\mathbf{W}\mathbf{s}(t)+n(t),
\end{align}
where $n(t)$ represents the additive white Gaussian noise (AWGN) with variance $\sigma^2$. Since the radar waveforms $s_k(t)$, $k\in\{K+1,...,M_t\}$ can be made \emph{a-priori} known by the UE, they can be removed from the received signal in \eqref{Eq_received_comm}. The resulting signal is 
\begin{align}
\small
\label{Eq_received_comm2}
    y'_k(t)=\mathbf{h}^{H}_{k}\mathbf{w}_{k}s_k(t)+\mathbf{h}^{H}_k{\sum_{i\neq k,i=1}^{K}}\mathbf{w}_{i}s_i(t)+n(t),
\end{align}
where the second term is the interference. Therefore, the received SINR for UE $k$ is
\begin{small}
\begin{align}
\label{Comm_SINR}
    \gamma_k(\{\mathbf{w}_k\})=\frac{|\mathbf{h}^{H}_k\mathbf{w}_k|^2}{\sum_{i\neq k,i=1}^{K}|\mathbf{h}^{H}_k\mathbf{w}_i|^2+\sigma^2}.
\end{align}
\end{small}

We next derive the sensing output SNR for any given location $\mathbf{q}$ in the prescribed region $\mathcal{Q}$. Specifically, $\forall \mathbf{q}\in \mathcal{Q}$. Specifically, the sensing channel power gain of the transmit link and reflected link can be respectively expressed as
\begin{equation}
\small
    \beta_t(\mathbf{q})=\frac{\beta_0}{ \Vert \mathbf{q}- \mathbf{o} \Vert ^2},\qquad \beta_r(\mathbf{q})=\frac{\beta_0}{ \Vert \mathbf{q}-\mathbf{o}'  \Vert ^2},
\end{equation}
where $\beta_0$ denotes the channel power at the reference distance of 1 meter. Note that the received signal of the direct link from BS-1 to BS-2 can be removed based on the prior knowledge of the transmitted signal through the backhaul. Therefore, the signal received by BS-2 that is reflected by the sensing target located at $\mathbf{q}\in\mathcal{Q}$, can be written as
\begin{align}
\label{Eq_radar_receive}
    \mathbf{r}(t,\mathbf{q})=&\sqrt{\beta_t(\mathbf{q})\beta_r(\mathbf{q})}\alpha\mathbf{a}(\mathbf{q})\mathbf{b}^{H}(\mathbf{q})\mathbf{W}\mathbf{s}(t-\tau_1)+\mathbf{n}(t),
\end{align}
where $\mathbf{n}(t)\in\mathbb{C}^{M_r\times 1}$ is the AWGN with the power spectral density $N_0$, and the corresponding power is $\sigma^2=N_0B$. $\alpha$ denotes the complex-valued reflection coefficient with $|\alpha|^2$ proportional to the RCS. $\tau_1$ is the corresponding propagation delay associated with the target. $\mathbf{b}(\mathbf{q})$ and $\mathbf{a}(\mathbf{q})$ are the transmit and receive steering vectors, respectively. The signal vector $\mathbf{r}(t,\mathbf{q})$ is match-filtered by each of the transmitted waveform $s_k(t)$, and the output component corresponding to the $k$-th transmit waveform can be obtained as
\begin{align}
\small
\label{Eq_match-filter1}
    &\mathbf{y}_{k}(\mathbf{q})=\frac{1}{\sqrt{T_p}}\int_{T_p}\mathbf{r}(t,\mathbf{q})s_k^{*}(t-\tau_0)dt\nonumber\\
     &=\frac{\sqrt{\beta_t(\mathbf{q})\beta_r(\mathbf{q})}\alpha\mathbf{a}(\mathbf{q})\mathbf{b}^{H}(\mathbf{q})\mathbf{W}}{\sqrt{T_p}}\int_{T_p}\mathbf{s}(t-\tau_1)s_k^{*}(t-\tau_0)dt\nonumber\\
     &+\frac{1}{\sqrt{T_p}}\int_{T_p}\mathbf{n}(t)s_k^{*}(t-\tau_0)dt,
\end{align}
where $\tau_0$ is the delay tuned by the matched filter. The normalization factor $\frac{1}{\sqrt{T_p}}$ is applied to ensure that the noise will not be amplified. With  $T_p\gg\frac{1}{B}$, for $m, k\in\{1,...,K\}$, we have
\begin{small}
\begin{align}
\label{Eq_match_comm}
     \int_{T_p}s_m(t-\tau_1)s_k^{*}(t-\tau_0)dt\approx T_p{\rm {\bf E}}[s_m(t-\tau_1)s_k^{*}(t-\tau_0)]\nonumber\\
     =\left\{
    \begin{array}{lr}
        T_pR_{C}(\tau_0-\tau_1), &m= k,\\
        0, &m\neq k.\\
     \end{array}
\right.
\end{align}
\end{small}

For the dedicated radar signal, i.e., for $m, k\in\{K+1,...,M_t\}$, it follows from \eqref{radar_auto} that
\begin{align}
\small
\label{Eq_match_radar}
     \int_{T_p}s_m(t-\tau_1)s_k^{*}(t-\tau_0)dt=
     \left\{
    \begin{array}{lr}
        T_pR_{S}(\tau_0-\tau_1), &m=k,\\
        0, &m\neq k.\\
     \end{array}
\right.
\end{align}
Finally, for $ m\in\{1,...,K\}$ and $k\in\{K+1,...,M_t\}$, or $k\in \{1,...,K\}$ and $m\in \{K+1,...,M_t\}$, we have the output
\begin{small}
\begin{align}
\label{Eq_match_comm-radar}
     \int_{T_p}s_m(t-\tau_1)s_k^{*}(t-\tau_0)dt\approx T_p{\rm {\bf E}}[s_m(t-\tau_1)s_k^{*}(t-\tau_0)]=0.
\end{align}
\end{small}
Furthermore, for the peak of the matched filter output where $\tau_0$ aligned with the delay of the target, i.e., $\tau_0=\tau_1$, it follows from \eqref{Eq_match_comm}-\eqref{Eq_match_comm-radar} that \eqref{Eq_match-filter1} can be rewritten as
\begin{align}
\small
\label{Eq_match-output-k}
   \mathbf{y}_{k}(\mathbf{q})\approx\sqrt{T_p\beta_t(\mathbf{q})\beta_r(\mathbf{q})}\alpha\mathbf{a}(\mathbf{q})\mathbf{b}^{H}(\mathbf{q})\mathbf{w}_{k}+\mathbf{n}_k,\quad\forall k,
\end{align}
where $\mathbf{n}_k=\frac{1}{\sqrt{T_p}}\int_{T_p}\mathbf{n}(t)s^{*}_k(t-\tau_0)dt$ is the resulting noise vector which can be shown to have zero mean and covariance matrix ${\rm {\bf E}}[\mathbf{n}_k\mathbf{n}^{H}_k]=N_0\mathbf{I}$.
By concatenating the matched filter output components in \eqref{Eq_match-output-k} $\forall k=1,..., M_t$, we can obtain
\begin{align}
\small
    \mathbf{Y}(\mathbf{q})
    =\sqrt{T_p\beta_t(\mathbf{q})\beta_r(\mathbf{q})}\alpha\mathbf{a}(\mathbf{q})\mathbf{b}^{H}(\mathbf{q})\mathbf{W}+\mathbf{N},
\end{align}
where $\mathbf{N}$ is the concatenated independent and identically distributed (i.i.d.) noise matrix with zero mean and variance $N_0$.
By vectorizing $\mathbf{Y}^{H}(\mathbf{q})$, we have
\begin{equation}
    \mathbf{y}(\mathbf{q})=\sqrt{T_p\beta_t(\mathbf{q})\beta_r(\mathbf{q})}\alpha^{*}\mathbf{a}^{*}(\mathbf{q})\otimes (\mathbf{W}^{H}\mathbf{b}(\mathbf{q}))+\mathbf{\tilde{n}},
\end{equation}
where $\mathbf{\tilde{n}}\in\mathbb{C}^{M_rM_t\times 1}$ denotes the vectorized noise. 
Consequently, by applying the linear receive beamforming vector $\mathbf{v}\in\mathbb{C}^{M_rM_t\times 1}$, we have
\begin{align}
\small
\label{Eq_linear-filtering}
    z_S(\mathbf{q})=&\sqrt{T_p\beta_t(\mathbf{q})\beta_r(\mathbf{q})}\alpha^{*}\mathbf{v}^{H}\{\mathbf{a}^{*}(\mathbf{q})\otimes (\mathbf{W}^{H}\mathbf{b}(\mathbf{q}))\}+\mathbf{v}^{H}\mathbf{\tilde{n}}.
\end{align}
Note that $\mathbf{v}$ is utilized for target searching. For a potential target at location $\mathbf q\in\mathcal{Q}$, the optimal receive beamforming that maximizes the sensing output SNR is given by 
\begin{small}
\begin{equation}
\label{optimization_problem_W_r}
    \mathbf{v}=\frac{\mathbf{a}^{*}(\mathbf{q})}{\Vert\mathbf{a}(\mathbf{q})\Vert}\otimes\frac{\mathbf{W}^{H}\mathbf{b}(\mathbf{q})}{\Vert\mathbf{W}^{H}\mathbf{b}(\mathbf{q})\Vert}.
\end{equation} 
\end{small}

By using the identity  $(\mathbf{A}\otimes\mathbf{B})^{H}=\mathbf{A}^{H}\otimes\mathbf{B}^{H}$ and $(\mathbf{A}\otimes\mathbf{B})(\mathbf{C}\otimes\mathbf{D})=(\mathbf{A}\mathbf{C})\otimes(\mathbf{B}\mathbf{D})$, the maximum sensing SNR for a potential target located at $\mathbf{q} \in \mathbf{Q}$ can be expressed as
\begin{align}
    \gamma_S(\mathbf{q})
    &=\frac{T_p\beta_t(\mathbf{q})\beta_r(\mathbf{q})|\alpha|^2 |\mathbf{v}^{H}\{\mathbf{a}^{*}(\mathbf{q})\otimes\mathbf{W}^{H}\mathbf{b}(\mathbf{q})\}|^2}{\mathbf{ E}(\mathbf{v}^{H}\mathbf{\tilde{n}}\mathbf{\tilde{n}}^{H}\mathbf{v})}\nonumber\\
    &=\frac{K_{\rm CPI}\beta^2_0|\alpha|^2\|\mathbf{a}(\mathbf{q})\|^2 (\mathbf{b}^{H}(\mathbf{q})\mathbf{W}\mathbf{W}^{H} \mathbf{b}(\mathbf{q}))}{\Vert \mathbf{q}-\mathbf{o}  \Vert ^2\Vert \mathbf{q}- \mathbf{o}'\Vert^2\sigma^2},
\end{align}
where $K_{\rm CPI}=BT_{p}$ is the time-bandwidth product over one CPI, which can be interpreted as the number of symbol durations over one CPI. 

To achieve seamless sensing coverage over the prescribed region $\mathcal{Q}$, a transmit beamforming optimization problem is formulated to maximize the worst-case sensing SNR across $\mathcal{Q}$, subject to the SINR constraints for the communication UEs, which can be stated as
\begin{small}
\begin{subequations}
\label{optimization_problem1}
\begin{align}
    \max_{\mathbf{W}}\min_{\mathbf{q}\in\mathcal{Q}}&\frac{K_{\rm CPI}\beta^2_0|\alpha|^2\|\mathbf{a}(\mathbf{q})\|^2 (\mathbf{b}^{H}(\mathbf{q})\mathbf{W}\mathbf{W}^{H} \mathbf{b}(\mathbf{q}))}{\Vert \mathbf{q}-\mathbf{o}  \Vert ^2\Vert \mathbf{q}- \mathbf{o}'\Vert^2\sigma^2}\\
    \text{s.t.}\quad& \frac{|\mathbf{h}^{H}_k\mathbf{w}_k|^2}{\sum_{i\neq k,i=1}^{K}|\mathbf{h}^{H}_k\mathbf{w}_i|^2+\sigma^2}\geq \bar{\gamma}_k, k\in\mathcal{K},    \label{op1-c1}\\
    &{\rm tr}(\mathbf{W}\mathbf{W}^{H})\leq  P_{t},    \label{op1-c2}
\end{align}
\end{subequations}
\end{small}
where $\bar{\gamma}_k$ is a predefined SINR threshold for each UE $k$.
\section{Proposed Solution}
\subsection{Closed-form Solution for Single Sensing Point and One UE}

In order to gain some insights, in this subsection, we consider the special case of \eqref{optimization_problem1} with one single communication UE ($K$=1) and one single sensing point, denoted by $\mathbf{q}_0$. In this case,  \eqref{optimization_problem1} reduces to
\begin{small}
\begin{subequations}
\label{optimization_problem1-reduce}
\begin{align}
    \max_{\mathbf{W}}&\frac{K_{\rm CPI}\beta^2_0|\alpha|^2\|\mathbf{a}(\mathbf{q}_0)\|^2 (\mathbf{b}^{H}(\mathbf{q}_0)\mathbf{W}\mathbf{W}^{H} \mathbf{b}(\mathbf{q}_0))}{\Vert \mathbf{q}_0-\mathbf{o}  \Vert ^2\Vert \mathbf{q}_0- \mathbf{o}'\Vert^2\sigma^2}\\
    \text{s.t.}\quad
    &|\mathbf{h}^{H}_1\mathbf{w}_1|^2\geq \sigma^2\bar{\gamma}_1,\\
    & (\text{\ref{op1-c2}}).
\end{align}
\end{subequations}
\end{small}

By defining $\mathbf{R}_1=\mathbf{w}_1\mathbf{w}^{H}_1$ and $\mathbf{R}'=\sum_{k=2}^{M_t}\mathbf{w}_k\mathbf{w}^H_k$, and omitting those constant terms in the objective function, \eqref{optimization_problem1-reduce} can be rewritten as
\begin{subequations}
\label{optimization_problem2}
\begin{align}
\small
    \max_{\mathbf{R}_1\succeq \mathbf{0},\mathbf{R}'\succeq \mathbf{0}}  &\mathbf{b}^{H}(\mathbf{q}_0)(\mathbf{R}_1+\mathbf{R}')\mathbf{b}(\mathbf{q}_0)\\
    \text{s.t.}\quad& \mathbf{h}^{H}_1\mathbf{R}_1\mathbf{h}_1\geq \sigma^2\bar{\gamma}_1,\label{op2-c1}\\
    &{\rm tr}(\mathbf{R}_1+\mathbf{R}')\leq  P_{t},\label{op2-c1}\\
    &{\rm rank}(\mathbf{R}_1)\leq 1,\label{op2-c3}\\
    &{\rm rank}(\mathbf{R}')\leq M_t-1.\label{op2-c4}
\end{align}
\end{subequations}

Problem \eqref{optimization_problem2} is non-covex due to the rank constraints \eqref{op2-c3} and \eqref{op2-c4}. However, its optimal solution can be obtained in closed-form with the following Theorem.

\begin{theorem}
\label{Theorem_1}
An optimal solution $(\mathbf{\bar{R}}_1,\mathbf{\bar{R}}')$ to problem \eqref{optimization_problem2} is
\begin{equation}
    \mathbf{\bar{R}}_1=\mathbf{\bar{w}}_1\mathbf{\bar{w}}^{H}_1, \quad \mathbf{\bar{R}}'=\mathbf{0},
\setcounter{equation}{22}
\end{equation}
where $\mathbf{\bar{w}}_1$ is given in \eqref{optimal_beam_single} shown at the top of this page, with $\mathbf{\bar{h}}_1=\frac{\mathbf{h}_1}{\|\mathbf{h}_1\|}$, $\bar{\mathbf{b}}(\mathbf{q}_0)=\frac{\mathbf{b}(\mathbf{q}_0)}{\|\mathbf{b}(\mathbf{q}_0)\|}$, and $\bar{\mathbf{b}}^{H}(\mathbf{q}_0)_{\bot}=\frac{\mathbf{b}(\mathbf{q}_0)_{\bot}}{\|{\mathbf{b}}(\mathbf{q}_0)_{\bot}\|}$ with $\mathbf{b}(\mathbf{q}_0)_{\bot}=\mathbf{b}(\mathbf{q}_0)-(\bar{\mathbf{h}}^H_1\mathbf{b}(\mathbf{q}_0))\bar{\mathbf{h}}_1$, and $\beta_{bh}=\mathbf{\bar{h}}^{H}_1\mathbf{b}(\mathbf{q}_0)$ with $\angle\beta_{bh}\in[0,2\pi)$.
\vspace{-4mm}
\end{theorem}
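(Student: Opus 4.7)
The plan is to combine a universal upper bound on the sensing objective with two achievability constructions that together force $\bar{\mathbf{R}}'=\mathbf{0}$ and reveal the rank-one optimal beamformer. First I would note that by the trace inequality for positive semidefinite matrices,
\[
\mathbf{b}^{H}(\mathbf{q}_0)(\mathbf{R}_1+\mathbf{R}')\mathbf{b}(\mathbf{q}_0)=\mathrm{tr}\bigl((\mathbf{R}_1+\mathbf{R}')\mathbf{b}(\mathbf{q}_0)\mathbf{b}^{H}(\mathbf{q}_0)\bigr)\le P_t\,\|\mathbf{b}(\mathbf{q}_0)\|^{2},
\]
with equality only when the whole power budget is aligned with $\bar{\mathbf{b}}(\mathbf{q}_0)$. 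This universal ceiling shows that any optimum must push $\mathbf{R}_1+\mathbf{R}'$ toward the direction $\bar{\mathbf{b}}(\mathbf{q}_0)$, while the SINR constraint acts only on $\mathbf{R}_1$.

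Next I would split on whether $\mathbf{w}_1=\sqrt{P_t}\,\bar{\mathbf{b}}(\mathbf{q}_0)$ is itself SINR-feasible, i.e., whether $P_t|\beta_{bh}|^{2}\|\mathbf{h}_1\|^{2}\ge \sigma^{2}\bar{\gamma}_1\|\mathbf{b}(\mathbf{q}_0)\|^{2}$. If so, this $\mathbf{w}_1$ paired with $\bar{\mathbf{R}}'=\mathbf{0}$ attains the ceiling and is therefore optimal, yielding the first branch of \eqref{optimal_beam_single}. Otherwise the SINR must be active at the optimum---else a small rotation of $\mathbf{w}_1$ toward $\bar{\mathbf{b}}(\mathbf{q}_0)$ would strictly improve the objective---so I would restrict $\mathbf{w}_1=\mu_1\bar{\mathbf{h}}_1+\mu_2\bar{\mathbf{b}}(\mathbf{q}_0)_{\perp}$ to the relevant two-dimensional subspace (any orthogonal component only wastes power), set $|\mu_1|^{2}=\sigma^{2}\bar{\gamma}_1/\|\mathbf{h}_1\|^{2}$ by binding SINR, and align phases ($\arg\mu_1=\angle\beta_{bh}$, $\mu_2$ real nonnegative) so that $\mathbf{b}^{H}(\mathbf{q}_0)\mathbf{w}_1=|\beta_{bh}||\mu_1|+\|\mathbf{b}(\mathbf{q}_0)_{\perp}\||\mu_2|$.

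The main obstacle is ruling out $\bar{\mathbf{R}}'\neq\mathbf{0}$ in this binding case, because any trace budget placed in $\mathbf{R}'$ along $\bar{\mathbf{b}}(\mathbf{q}_0)$ contributes directly to the sensing objective and could in principle beat spending the same power through the constrained rank-one $\mathbf{w}_1$. I would close this gap by jointly parametrizing with $P'=\mathrm{tr}(\mathbf{R}')$, taking the trace-efficient choice $\mathbf{R}'=P'\,\bar{\mathbf{b}}(\mathbf{q}_0)\bar{\mathbf{b}}^{H}(\mathbf{q}_0)$, and completing the square to show that the total objective collapses to
\[
P_t\|\mathbf{b}(\mathbf{q}_0)\|^{2}-\bigl(|\beta_{bh}||\mu_2|-\sqrt{c}\,\|\mathbf{b}(\mathbf{q}_0)_{\perp}\|\bigr)^{2},\qquad c:=\sigma^{2}\bar{\gamma}_1/\|\mathbf{h}_1\|^{2}.
\]
In the binding regime the unconstrained minimizer of this square lies outside the feasible range $|\mu_2|^{2}\le P_t-c$, so the deficit is monotonically decreasing in $|\mu_2|$ and is minimized by saturating $|\mu_2|^{2}=P_t-c$; this forces $P'=0$ and reproduces the second branch of \eqref{optimal_beam_single}. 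Combining the two cases establishes the theorem.
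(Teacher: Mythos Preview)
Your argument is correct but proceeds quite differently from the paper. The paper first drops both rank constraints and observes (its Lemma~1) that for any optimum $(\hat{\mathbf{R}}_1,\hat{\mathbf{R}}')$ of the relaxed problem, the pair $(\hat{\mathbf{R}}_1+\hat{\mathbf{R}}',\mathbf{0})$ is also optimal: the objective and trace are unchanged, and since $\hat{\mathbf{R}}'\succeq\mathbf{0}$ the SINR constraint can only loosen. This single absorption step eliminates $\mathbf{R}'$ without any case analysis, after which the remaining problem in $\mathbf{R}_1$ alone is identified with the SWIPT beamforming problem of \cite{RZhang2013TWC}, whose rank-one solution \eqref{optimal_beam_single} is quoted; the rank constraints are then seen to hold automatically. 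Your route instead bounds the objective by $P_t\|\mathbf{b}(\mathbf{q}_0)\|^2$, splits on SINR feasibility of $\sqrt{P_t}\,\bar{\mathbf{b}}(\mathbf{q}_0)$, and in the binding case parametrizes $(\mathbf{w}_1,\mathbf{R}')$ in the two-dimensional subspace, completing the square to show $P'=0$. The paper's approach is shorter and explains in one stroke why dedicated sensing waveforms are never needed here; yours is more computational but fully self-contained (you do not outsource the two-branch formula to a citation) and the completing-the-square identity makes transparent exactly how far the optimum falls short of the universal ceiling in the second regime.
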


\newcounter{mytempeqncnt}
\begin{figure*}[!t]
\normalsize
\setcounter{mytempeqncnt}{\value{equation}}
\setcounter{equation}{22}
\begin{align}
\label{optimal_beam_single}
\mathbf{\bar{w}}_1=
    \left\{
    \begin{array}{lr}
       \sqrt{P_t}\bar{\mathbf{b}}(\mathbf{q}_0),& 0\leq\bar{\gamma}\leq |\mathbf{h}^{H}_1\bar{\mathbf{b}}(\mathbf{q}_0)|^2\frac{P_{t}}{\sigma^2},\\
        \sqrt{\frac{\sigma^2\bar{\gamma}}{\|\mathbf{h}_1\|^2}}\bar{\mathbf{h}}_1e^{j\angle\beta_{bh}} + \sqrt{P_t-\frac{\sigma^2\bar{\gamma}}{\|\mathbf{h}_1\|^2}}\bar{\mathbf{b}}(\mathbf{q}_0)_{\bot},& |\mathbf{h}^{H}_1\bar{\mathbf{b}}(\mathbf{q}_0)|^2\frac{P_{t}}{\sigma^2}\leq\bar{\gamma}\leq\|\mathbf{h}^{H}_1\|^2\frac{P_{t}}{\sigma^2}.
     \end{array}
\right.
\end{align}
\setcounter{equation}{23}
\hrulefill
\vspace{-4mm}
\end{figure*}
\begin{proof}
Please refer to Appendix A.
\end{proof}
\vspace{-2mm}
Theorem \ref{Theorem_1} shows that for the special case of one UE and single sensing point, there is no need to transmit dedicated radar waveform. Instead, using the communication signal is sufficient for ISAC if the output SNR is the main concern for sensing. 

\subsection{General Solution for Contiguous Sensing Coverage and Multiple UEs}

In this subsection, we investigate the general problem \eqref{optimization_problem1}, in which $\mathcal{Q}$ is a contiguous sensing coverage region. To make the optimization problem \eqref{optimization_problem1} more tractable, $\mathcal{Q}$ is discretized into $L$ points $\{\mathbf{q}_l\}_{l=1}^L$. Therefore, by discarding those constant terms in the objective function,  \eqref{optimization_problem1} is reformulated as
\begin{small}
\begin{subequations}
\label{optimization_problem1-3}
\begin{align}
    \max_{\mathbf{W}}&\min_{\mathbf{q}_l,l=1,...,L}\frac{\|\mathbf{a}(\mathbf{q}_l)\|^2 (\mathbf{b}^{H}(\mathbf{q}_l)\mathbf{W}\mathbf{W}^{H}\mathbf{b}(\mathbf{q}_l))}{\Vert \mathbf{q}_l -\mathbf{o} \Vert ^2\Vert \mathbf{q}_l- \mathbf{o}'\Vert^2}\\
    & \text{s.t.}\quad(\text{\ref{op1-c1}}),\quad (\text{\ref{op1-c2}}).
\end{align}
\end{subequations}
\end{small}

\setlength{\textfloatsep}{4pt}
By introducing an auxiliary variable $\zeta$, problem \eqref{optimization_problem1-3} can be equivalently expressed as 
\begin{subequations}
\label{optimization_problem1-4}
\begin{align}
    \max_{\{\mathbf{w}_k\}_{k=1}^{M_t},\zeta}&\zeta\\
     \text{s.t.}\quad&(\text{\ref{op1-c1}}),\quad (\text{\ref{op1-c2}}), \label{op1-4-c1}\\
    &\sum_{k=1}^{M_t}|\mathbf{b}^{H}(\mathbf{q}_l)\mathbf{w}_k|^2\geq \eta_{l}\zeta,l=1,...,L,\forall k,\label{op1-4-c3}
\end{align}
\end{subequations}
where $\eta_l=\frac{\Vert \mathbf{q}_l-\mathbf{o}  \Vert ^2\Vert \mathbf{q}_l- \mathbf{o}'\Vert^2 }{\|\mathbf{a}(\mathbf{q}_l)\|^2}$. Problem \eqref{optimization_problem1-4} is still non-convex due to the non-convex constraints in \eqref{op1-c1} and  \eqref{op1-4-c3}. Fortunately, the inequality constraint \eqref{op1-c1} can be re-expressed as
\begin{small}
\begin{align}
\label{C1-conversion}
  \sum_{i=1}^{K}|\mathbf{h}^{H}_k\mathbf{w}_i|^2+\sigma^2\leq \left(1+\frac{1}{\bar{\gamma}_k}\right)|\mathbf{h}^{H}_k\mathbf{w}_k|^2,\quad k\in\mathcal{K}.
\end{align}
\end{small}

\begin{algorithm}[t]
\caption{SCA for solving \eqref{optimization_problem1-4}}
\label{alg:SCA}
\begin{algorithmic}[1]
\REQUIRE
$P_t$, $\bar{\gamma}_k$, $\mathbf{h}_k$, $\sigma$, $\mathbf{b}(\mathbf{q}_l)$.
\ENSURE
$\{\mathbf{w}^{\mathrm{opt}}_{k}\}$.
\STATE{Initialize $\mathbf{w}_k^{(0)},\forall k$. Let $\kappa$=0.}
\REPEAT
\STATE {Solve the convex problem \eqref{optimization_problem1-5} for the given local point $\{\mathbf{w}^{(\kappa)}_k\}$, and denote the optimal solution as $\{\mathbf{w}^{\mathrm{opt}}_k\}$.}
\STATE {Update the local point $\{\mathbf{w}^{(\kappa+1)}_k\}=\{\mathbf{w}^{\mathrm{opt}}_k\}$.}
\STATE{Update $\kappa=\kappa+1$.}
\UNTIL{The increase of the objective value of \eqref{optimization_problem1-5} is below a predefined threshold $\epsilon$.}
\end{algorithmic}
\end{algorithm}
\setlength{\textfloatsep}{1pt}
Without loss of optimality, by rotating the phase of $\mathbf{w}_k$ accordingly, we may assume that $\mathbf{h}^{H}_k\mathbf{w}_k$ is real and positive. Thus, \eqref{C1-conversion} can be equivalently written as the following second-order cone (SOC) constraints:
\begin{small}
\begin{align}
       & \|
        \mathbf{h}^{H}_k\mathbf{w}_1 ,...,\mathbf{h}^{H}_k\mathbf{w}_K,\sigma\|_2
    \leq\sqrt{1+\frac{1}{\bar{\gamma}_k}}\mathbf{h}^{H}_k\mathbf{w}_k, \nonumber\\
    &{\rm Im}\{\mathbf{h}^{H}_k\mathbf{w}_k\}=0, \quad k\in\mathcal{K}.
\end{align}
\end{small}
\setlength{\textfloatsep}{4pt}
Furthermore, to deal with the non-convex constraint \eqref{op1-4-c3}, we may apply the SCA technique \cite{Yzeng2019}. Specifically, by using the fact that the first order Taylor series of the differentiable convex functions serve as global lower bound, we have 
\begin{align}
\label{Taylor-expansion}
\small
       &\sum_{k=1}^{M_t}|\mathbf{b}^{H}(\mathbf{q}_l)\mathbf{w}_k|^2=\sum_{k=1}^{M_t}\mathbf{w}^{H}_k\mathbf{b}(\mathbf{q}_l)\mathbf{b}^{H}(\mathbf{q}_l)\mathbf{w}_k\triangleq g(\mathbf{w}_k)\nonumber\\ &\geq g(\mathbf{w}^{(\kappa)}_k)+2\sum_{k=1}^{M_t}{\rm Re}\{\mathbf{w}^{(\kappa)H}_k\mathbf{b}(\mathbf{q}_l)\mathbf{b}^{H}(\mathbf{q}_l)(\mathbf{w}_k-\mathbf{w}^{(\kappa)}_k)\}\nonumber\\  
       &=g(\mathbf{w}_k,\mathbf{w}^{(\kappa)}_k), \quad \forall \mathbf{w}_k,
\end{align}
where $\mathbf{w}^{(\kappa)}_k$ is the local point obtained at the $\kappa$-th iteration.
As such, the non-convex optimization problem \eqref{optimization_problem1-4} can be recast as the following convex problem
\begin{small}
\begin{subequations}
\label{optimization_problem1-5}
\begin{align}
    \max_{\{\mathbf{w}_k\}_{k=1}^{M_t},\zeta}&\zeta\\
    \text{s.t.}\quad&               \|
        \mathbf{h}^{H}_k\mathbf{w}_1 ,...,\mathbf{h}^{H}_k\mathbf{w}_K,\sigma\|_2
    \leq\sqrt{1+\frac{1}{\bar{\gamma}_k}}\mathbf{h}^{H}_k\mathbf{w}_k, k\in\mathcal{K},\\
    & \sum_{m=1}^{M_t}\|\mathbf{w}_m\|^2\leq  P_{t},\\
    & \eta_{l}\zeta-g(\mathbf{w}_k,\mathbf{w}^{(\kappa)}_k)\leq 0,l=1,...,L,\forall k,\\
    &{\rm Im}(\mathbf{h}^{H}_k\mathbf{w}_k)=0,k\in\mathcal{K}.
\end{align}
\end{subequations}
\end{small}
\setlength{\textfloatsep}{5pt}
The optimal solution of the convex optimization problem \eqref{optimization_problem1-5} can be efficiently obtained based on the standard convex optimization techniques \cite{CChi2017}, or readily available
software toolbox, such as CVX \cite{MGrant2018}. Furthermore, due to the global lower bound in \eqref{Taylor-expansion},  it can be shown that the corresponding objective value gives at least a lower bound to that of problem \eqref{optimization_problem1-4}. Thus, by iteratively solving the convex optimization problem \eqref{optimization_problem1-5} with the updated local point $\{\mathbf{w}^{(\kappa+1)}_k\}=\{\mathbf{w}^{\mathrm{opt}}_k\}$, an efficient local optimal solution of \eqref{optimization_problem1-4} and hence that of \eqref{optimization_problem1-3} can be obtained, which is summarized in Algorithm \ref{alg:SCA}.

\section{Simulation Results}
In this section, simulation results are provided to evaluate the performance of our proposed design. The height of both BSs is $H=10$ meters, and they are equipped with uniform planar array (UPA) on the $xz$-plane, consisting of $M_t=M_r=64$ antenna elements with the inter-element spacing $d=\frac{\lambda}{2}$, where $\lambda$ is the carrier wavelength. Therefore, the transmit steering vector towards location $\mathbf{q}$ is given by
\begin{align}
    \mathbf{b}(\mathbf{q}) =&[1,...,e^{j\frac{2\pi}{\lambda}d[(M_x-1)\sin\theta(\mathbf{q})\cos\phi(\mathbf{q})}]^{T}\nonumber\\
    &\otimes [1,...,e^{j\frac{2\pi}{\lambda}d(M_z-1)\cos\theta(\mathbf{q})]}]^{T},
\end{align}
where $\theta(\mathbf{q})$ and $\phi(\mathbf{q})$ denote elevation and azimuth angles, respectively, and $M_x=8$ and $M_z=8$ denote the number of antenna elements along $x$-axis and $z$-axis, respectively.  
The bandwidth is $B = 100$ MHz, and the CPI duration is $T_p=1$ ms. The transmit power is $P_t = 20$ dBm, and the reference channel power is $\beta_0=-40$ dB. The noise power is $\sigma^2=-90$ dBm. The channel coefficients for the communication UEs are generated based on Rician fading with the Rician factor $G$=10.
\setlength{\textfloatsep}{4pt}
\vspace{-5mm}
\begin{figure}[h]
\center
\includegraphics[width=2.8in]{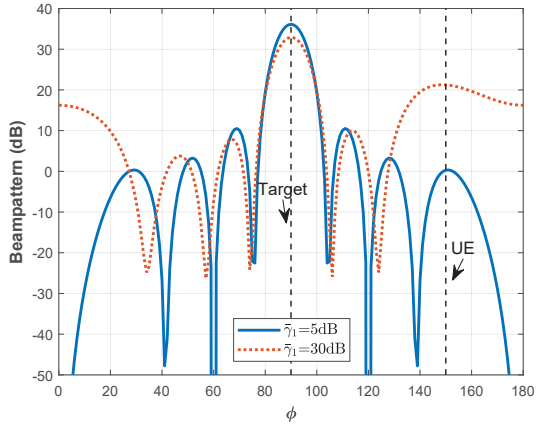}\\
\caption{Beampatterns for one UE and single sensing point. }\label{Fig_singlecase}
\vspace{-4mm}
\end{figure}

For the special case of one UE and single sensing point, Fig. \ref{Fig_singlecase} shows the beampatterns along $\phi$ with the optimal beamforming in \eqref{optimal_beam_single}, where the sensing point is located at $(\theta,\phi)=(90^{\circ}, 90^{\circ})$ with 50 meters to BS-1, while UE is located at $(\theta,\phi)=(135^{\circ}, 150^{\circ})$ with 30 meters to BS-1. The beampattern is computed by
    $\frac{\|\mathbf{b}^{H}(\mathbf{q})\mathbf{W}\|^2}{{\rm tr}(\mathbf{W}\mathbf{W}^{H})}$.
It can be observed that as the communication SNR threshold $\bar{\gamma}_1$ increases from 5 dB to 30 dB, the optimal beamforming vector $\bar{\mathbf{w}}_1$ will result in a larger gain towards the UE, while a slightly decreased gain towards the sensing target. This is expected due to the trade-off between sensing and communication when $\bar{\gamma}_1$ is large, as can be inferred from \eqref{optimal_beam_single}.

Next, we investigate the performance of our proposed design for the general case with regional sensinng coverage and multiple UEs. The top view on the geometry of the coverage region $\mathcal{Q}$ is shown in Fig. \ref{Fig_vertical}. Specifically, $\mathcal{Q}$ is a rectangular region with size of 50 {\rm meters} $\times$ 50 {\rm meters}, and is 10 meters above the ground. Therefore, $\mathcal{Q}$ lies in an angle space with $\theta=90^{\circ}$ and $\phi \in [45^{\circ}, 135^{\circ}]$. For ease of optimization, $\mathcal{Q}$ is uniformly discretized into $L = 2500$ positions. Additionally, the number of communication UEs is $K = 2$, which are located at $(\theta,\phi)=(135^{\circ}, 30^{\circ})$ and $(\theta,\phi)=(135^{\circ}, 150^{\circ})$, respectively. Furthermore, the communication SINR threshold $\bar{\gamma}_k$ for both UEs are set to 20 dB.
\begin{figure}[h]
\center
\includegraphics[width=2.8in]{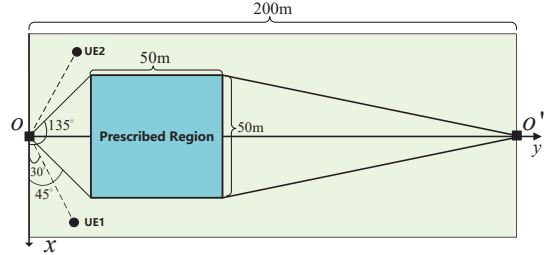}\\
\caption{Top view on the geometry of the prescribed region.}\label{Fig_vertical}
\vspace{-2mm}
\end{figure}

Based on the above settings, Fig. \ref{Fig_beam_Comparison} shows the beampattern of our proposed beamforming design. As a benchmark comparison, we also consider the communication-only beamforming design, which minimizes the transmit power subject to the SINR requirement for each UE, without considering the sensing coverage requirement  \cite{EBjornson2014}. It is observed that in the prescribed coverage region, the beampattern gain of our proposed beamforming design is higher than that of its communication-only counterpart. Furthermore, our proposed design can form a relatively flat beampattern for seamless coverage. By contrast, the communication-only beamforming design does not consider the sensing coverage requirement, leading to poor beamforming gains in the sensing coverage region. 
\vspace{-4mm}
\begin{figure}[h]
\center
\includegraphics[width=2.9in]{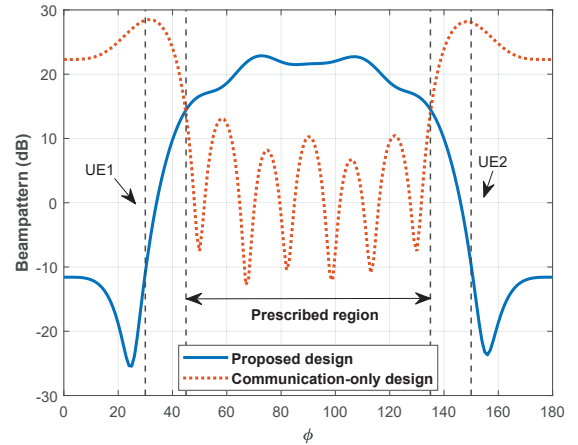}\\
\caption{Beampatterns of the proposed and communication-only beamforming designs for regional sensing coverage and multiple UEs.}\label{Fig_beam_Comparison}
\vspace{-2mm}
\end{figure}

\begin{figure}
 \centering
\begin{tabular}{ c @{\hspace{1pt}} c }
\includegraphics[width=2.5in]{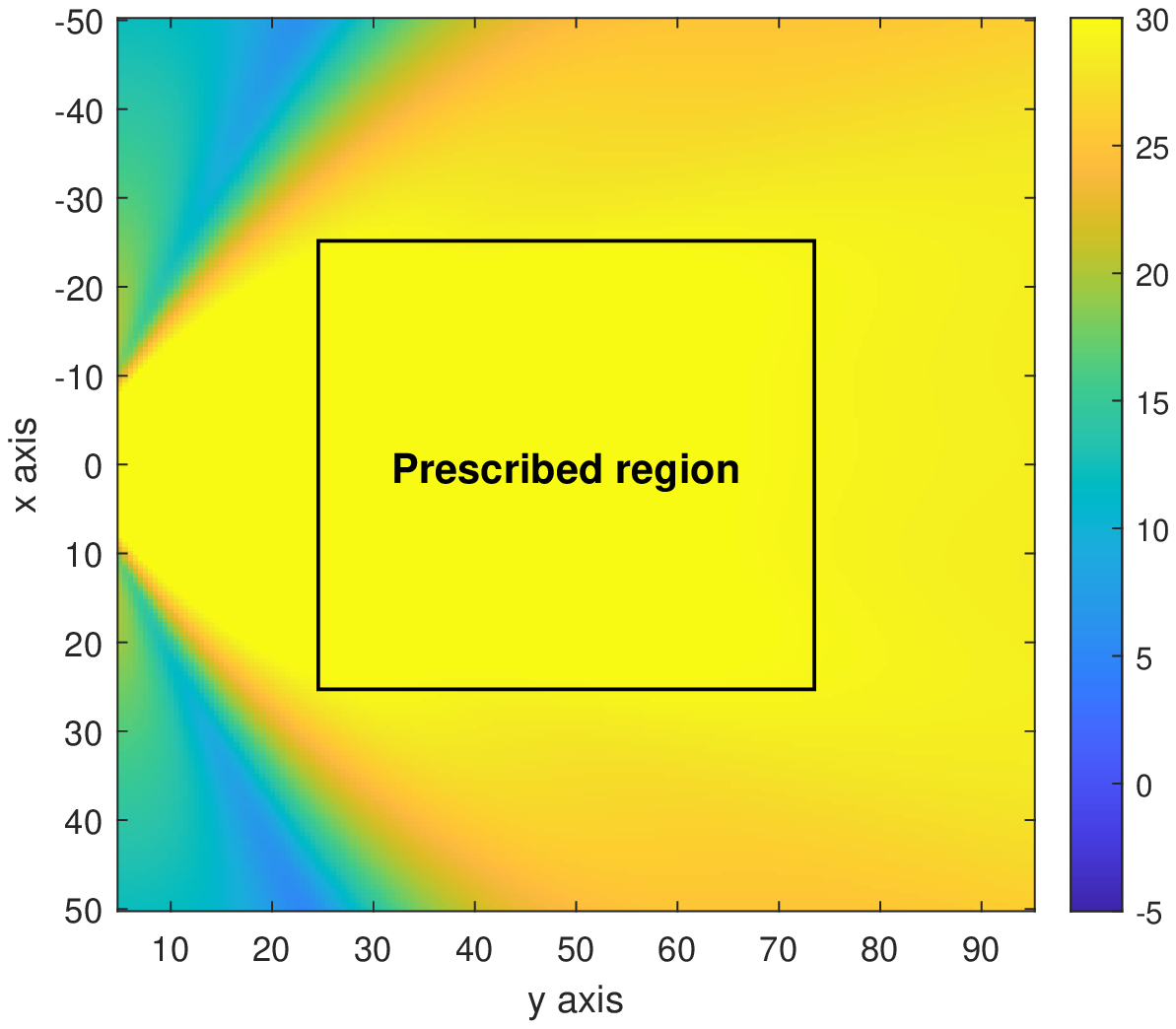}
\\
\small (a) Proposed beamforming design\\
\includegraphics[width=2.5in]{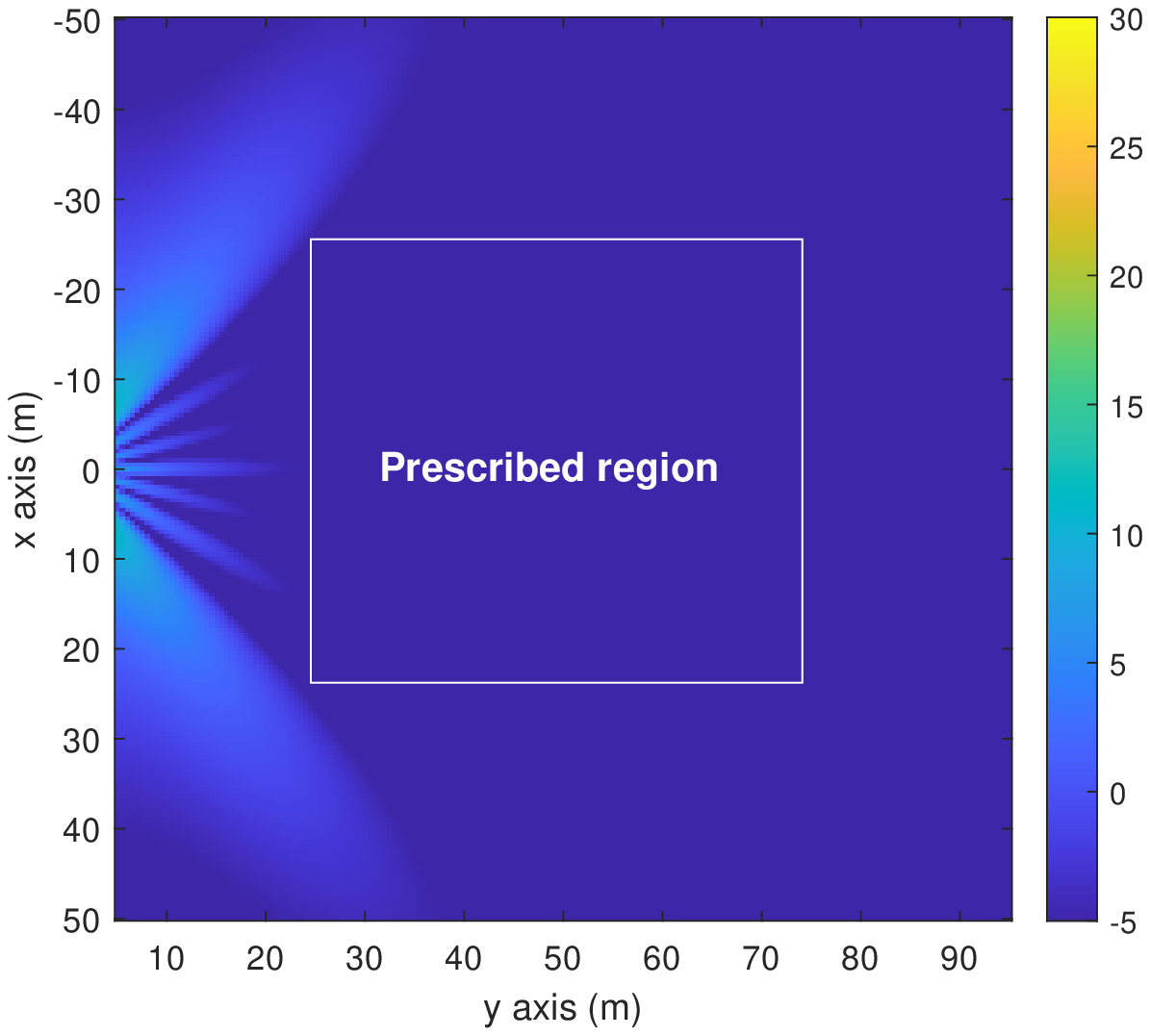}\\
 \small (b) Communication-only beamforming
\end{tabular}
\caption{Sensing SNR distribution of two beamforming designs.}
\label{SNR_comaprison}
\vspace{-2mm}
\end{figure}

Fig. \ref{SNR_comaprison} further compares our proposed beamforming design with the communication-only counterpart, in terms of the sensing SNR distribution over the prescribed coverage region. It is observed that the SNR obtained by our proposed design is much higher than that of communication-only beamforming design. Therefore, our proposed ISAC design is able to achieve seamless sensing coverage over the prescribed region, while satisfying the communication requirement of UEs. 

\section{Conclusion}
In this paper, we have investigated beamforming design for the coverage optimization in a cellular ISAC system. Our proposed beamforming design aimed to maximize the worst-case SNR in a prescribed region, while guaranteeing the SINR requirement for the UEs. A closed-form expression of optimal beamforming vector was derived for the special case with one sensing point and single UE. Besides, an SCA based algorithm was proposed to obtain an efficient local optimal solution to the general case with  multiple communication UEs and contiguous 
prescribed sensing coverage region. Simulation results demonstrated the performance gain of our proposed beamforming design over the benchmark scheme.
\vspace{-2mm}
\section{Appendix}
\subsection{Proof of Theorem 1}
By omitting the rank constraints \eqref{op2-c3} and \eqref{op2-c4}, we can obtain a rank-relaxed problem of \eqref{optimization_problem2}.

\begin{lemma}
Let $(\mathbf{\hat{R}}_1,\hat{\mathbf{R}}')$ be an optimal solution to the rank-relaxed problem of \eqref{optimization_problem2}, then the constructed new solution
\begin{align}
\small
    (\mathbf{R}^{\star}_1,\mathbf{R}'^{\star})=(\hat{\mathbf{R}}_1+\hat{\mathbf{R}}',\mathbf{0}),\nonumber
\end{align}
is also its optimal solution. 
\end{lemma}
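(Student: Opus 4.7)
The plan is to verify that the constructed pair $(\mathbf{R}_1^\star, \mathbf{R}'^\star) = (\hat{\mathbf{R}}_1 + \hat{\mathbf{R}}', \mathbf{0})$ is (i) feasible for the rank-relaxed version of \eqref{optimization_problem2} and (ii) attains the same objective value as $(\hat{\mathbf{R}}_1, \hat{\mathbf{R}}')$. Since $(\hat{\mathbf{R}}_1, \hat{\mathbf{R}}')$ is optimal by assumption, these two facts together imply that the constructed pair is also optimal.

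For feasibility, I would check each constraint of the rank-relaxed problem in turn. Positive semidefiniteness of $\mathbf{R}_1^\star$ follows from $\hat{\mathbf{R}}_1 \succeq \mathbf{0}$ and $\hat{\mathbf{R}}' \succeq \mathbf{0}$, and $\mathbf{R}'^\star = \mathbf{0}$ is trivially PSD. The communication SINR constraint is preserved because $\mathbf{h}_1^H \mathbf{R}_1^\star \mathbf{h}_1 = \mathbf{h}_1^H \hat{\mathbf{R}}_1 \mathbf{h}_1 + \mathbf{h}_1^H \hat{\mathbf{R}}' \mathbf{h}_1 \geq \mathbf{h}_1^H \hat{\mathbf{R}}_1 \mathbf{h}_1 \geq \sigma^2 \bar{\gamma}_1$, where the first inequality uses $\hat{\mathbf{R}}' \succeq \mathbf{0}$. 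The power constraint is preserved with equality-of-form since $\mathrm{tr}(\mathbf{R}_1^\star + \mathbf{R}'^\star) = \mathrm{tr}(\hat{\mathbf{R}}_1 + \hat{\mathbf{R}}') \leq P_t$.

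For the objective, I would simply observe that the sensing objective $\mathbf{b}^H(\mathbf{q}_0)(\mathbf{R}_1 + \mathbf{R}')\mathbf{b}(\mathbf{q}_0)$ depends only on the sum $\mathbf{R}_1 + \mathbf{R}'$, so the two candidates yield identical values. Combining this with feasibility, the constructed pair achieves the optimal value and is therefore itself optimal for the rank-relaxed problem.

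I do not anticipate a genuine obstacle here; the key structural insight is that both the power constraint and the objective depend on $\mathbf{R}_1$ and $\mathbf{R}'$ only through their sum, while the SINR constraint is monotone in $\mathbf{R}_1$ under the PSD ordering, so mass can be freely transferred from $\hat{\mathbf{R}}'$ to $\hat{\mathbf{R}}_1$ without harming any constraint. The mild care is in invoking $\hat{\mathbf{R}}' \succeq \mathbf{0}$ (rather than any rank information) when bounding $\mathbf{h}_1^H \hat{\mathbf{R}}' \mathbf{h}_1 \geq 0$, which is exactly what the rank-relaxed feasible set provides.
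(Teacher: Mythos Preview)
Your proposal is correct and follows essentially the same approach as the paper: verify that the constructed pair is feasible (PSD, SINR via $\hat{\mathbf{R}}'\succeq\mathbf{0}$, and trace unchanged) and that it attains the same objective value because the objective depends only on the sum $\mathbf{R}_1+\mathbf{R}'$. Your write-up is in fact slightly more explicit than the paper's in checking the PSD constraints, but the argument is otherwise identical.
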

\begin{proof}
To show Lemma 1, we only need to show that $(\mathbf{R}^{\star}_1,\mathbf{R}'^{\star})$ satisfy all constraints of the rank-relaxed problem of \eqref{optimization_problem2} and achieves the same objective value as $(\mathbf{\hat{R}}_1,\hat{\mathbf{R}}')$. First, by substituting the solution $ (\mathbf{R}^{\star}_1,\mathbf{R}'^{\star})$ into the rank-relaxed problem, we can observe that the objective value is the same as that of $(\mathbf{\hat{R}}_1,\hat{\mathbf{R}}')$, and \eqref{op2-c1} is satisfied as well.

Next, since both $\hat{\mathbf{R}}_1$ and $\hat{\mathbf{R}}'$ are positive semidefinite, for the new solution $(\mathbf{R}^{\star}_1,\mathbf{R}'^{\star})$, we have
\begin{align}
\small
    \mathbf{h}^{H}_1\mathbf{R}^{\star}_1\mathbf{h}_1= \mathbf{h}^{H}_1(\hat{\mathbf{R}}_1+\hat{\mathbf{R}}')\mathbf{h}_1\geq \mathbf{h}^{H}_1\hat{\mathbf{R}}_1\mathbf{h}_1\geq\sigma^2\bar{\gamma}_1.
\end{align}
Thus, the constraint \eqref{op2-c1} still holds with the solution $(\mathbf{R}^{\star}_1,\mathbf{R}'^{\star})$. Therefore, $(\mathbf{R}^{\star}_1,\mathbf{R}'^{\star})$ is also an optimal solution to the rank-relaxed problem. This completes the proof.
\end{proof}
\vspace{-2mm}
Based on Lemma 1, without loss of optimality, we may let $\mathbf{R}'=0$ so that the rank-relaxed problem of \eqref{optimization_problem2} reduces to
\begin{align}
\small
\label{optimization_problem2-2}
    \max_{\mathbf{R}_1\succeq \mathbf{0}}&  \mathbf{b}^{H}(\mathbf{q}_0)\mathbf{R}_1\mathbf{b}(\mathbf{q}_0)\\
    \text{s.t.}\quad& \mathbf{h}^{H}_1\mathbf{R}_1\mathbf{h}_1\geq \sigma^2\bar{\gamma}_1,\nonumber\\
    &{\rm tr}(\mathbf{R}_1)\leq  P_{t}.\nonumber
\end{align}
\eqref{optimization_problem2-2} resembles the problem for simultaneous wireless information and power transfer (SWIPT) in \cite{RZhang2013TWC}, for which the optimal solution is obtained as $\bar{\mathbf{R}}_1=\bar{\mathbf{w}}_1\bar{\mathbf{w}}^{H}_1$, with $\bar{\mathbf{w}}_1$ given by \eqref{optimal_beam_single}. 

As a result, an optimal solution to the rank-relaxed problem of \eqref{optimization_problem2} is $(\bar{\mathbf{R}}_1,\mathbf{0})$. It is obvious that this solution also satisfies the rank constraints \eqref{op2-c3} and \eqref{op2-c4} of \eqref{optimization_problem2}. Therefore, it must be an optimal solution to \eqref{optimization_problem2} as well. This completes the proof of Theorem 1.

\end{document}